\documentclass{llncs} 
\usepackage{bm}
\usepackage{graphicx}
\usepackage{latexsym}
\usepackage{amsmath}
\usepackage{hyperref}
\usepackage{float}
\usepackage{algorithm}
\usepackage{algorithmic}
\usepackage{hyperref}
\setlength{\textwidth}{5.2in} 
\setlength{\textheight}{8.60in} 
\setlength{\topmargin}{-0.4in} 
\setlength{\oddsidemargin}{0.7in} 
\setlength{\evensidemargin}{0.7in} 
\title{\ Derandomization of Online Assignment Algorithms for Dynamic Graphs } 
\author{Ankur Sahai \thanks{asahai@cse.iitk.ac.in} }
\institute{Department of Computer Science, \\Indian Institute of Technology, Kanpur}
\date{ August 30, 2010 }
\begin{document}
\hyphenpenalty=100000
\setlength{\parindent}{10pt}
\setlength{\parskip}{1ex} 
\maketitle
\begin{abstract}

This paper analyzes different online algorithms for the problem of assigning weights to edges in a fully-connected bipartite graph that minimizes the overall cost while satisfying constraints. Edges in this graph may disappear and reappear over time. Performance of these algorithms is measured using simulations. This paper also attempts to derandomize the randomized online algorithm for this problem.
\end{abstract}

\section{Scope}
\label{sec:scope}
This paper aims to analyze online algorithms \ref{sec:online-algorithms} for dynamically evolving graphs\cite{rand-dyn-graphs,emp-dyn-graph-algos,multiflow-dyn-graph-algos}. The input consists of a bipartite graph $G = (V, E)$ with two types of nodes - consumers $C$ and producers $P$ (${V} = {C} \cup {P}$) and edges $E$ where $\{e_{ij} \in E: i \in C, j \in P\}$ and \emph{attribute arrays} associated with nodes $a_{v_i}(t)=[a_{v_{i1}}, a_{v_{i2}}, a_{v_{i3}},\cdots], v_i \in V$ and edges $a_{e_{ij}}(t)=[a_{e_{ij1}}, a_{e_{ij2}}, a_{e_{ij3}},\cdots], e_{ij} \in E$ whose values may change over time.

A sequence of online \emph{service requests}- $R = R(t) R(t+1) R(t+2)\cdots$ is received as input that consist of one or more consumer demands and edge failures. Service requests can contain more than one demands corresponding to consumers $R_k(t), k \in C$ - $R(t) = \cup_k R_k(t), k=|C|$ (corresponding to a multi-tape Turing machine). Here, t is the instance when service request $R(t)$ is received and is an increasing function of time. Demands act by either removing / adding edges or modifying edge attributes.

The objective is to minimize the overall cost of weight assignments such that it is not much worse than cost of optimal offline. 
\begin{equation}
\sum_{i \in C, j \in P} f(a_{e_{ij}}(t), R(t))*e_{ij}(t) \le\alpha*OPT(t),\: \forall t \in T 
\label{operation 1}
\end{equation} 

with constraints for consumers,
\begin{equation}
f(a_{e_{ij}}(t), R(t)) = f(a_{v_i}(t)), \forall i \in C
\label{operation 2}
\end{equation} 

and producers ,
\begin{equation}
f(a_{e_{ij}}(t), R(t)) = f(a_{v_j}(t)), \forall j \in P
\label{operation 3}
\end{equation} 

The dynamic nature of the edges is characterized by the following -
\begin{equation}
e_{ij}(t) = \left\{
\begin{array}{c l}
1 & if\mbox{ there is an edge between $i \in C$ and $j \in P$ at instance t} \\
0 & otherwise \mbox{ }
\end{array}
\right.
\label{operation 4}
\end{equation} 

This paper focuses on the optimal offline strategy and tries to find competitive online algorithms for this problem. This papers tries to use randomization to make the online algorithms more competitive. This papers also attempts to derandomize the randomized online algorithm.

\section{Problem Definition}
\label{sec:problem-definition}
This paper considers a subset of the problem specified in section \ref{sec:scope}. Given a complete bipartite graph $G = (V, E)$ where, ${V}$ is a finite set of nodes which consists of consumers ${c_i}, i \in C$ with indegree zero and producers ${p_j}, j \in P$ with outdegree zero such that, ${V} = {C} \cup {P}$ and edges $e_{ij} \in E$ where, $|E| = |V|^2$ with distances $d_{ij}$ between them. 

\begin{problem}
Online service requests $R=R_1R_2\cdots R_{n_1}, n_1 = |C|$ are received as input such that each service request has a unique demand $R_k, k \in C$ ($\cup R_{C \setminus k} = \emptyset, R_i \neq R_j , \forall i, j\in C$). These demands act by either increasing the edge weights $w_{ij}$s or removing an edge by setting $d_{ij} = \infty$. Edge weights $w_{ij}$s assigned to the edges cannot be reduced apart from the case where an edge goes down. In this case, weights are set to zero - $w_{ij} = 0 \forall e_{ij} = 0$.

Find an $\alpha$-competitive online algorithm for satisfying the service requests ${R}$ that minimizes the sum of weights:
\begin{equation}
\sum_{i \in C, j \in P} w_{ij}(t)*d_{ij}(t) *e_{ij}(t)\le\alpha*OPT(t), \forall t \in T
\label{operation 5}
\end{equation} 
where, c is a constant and OPT(t) is the output of the optimal offline algorithm at instance t. Such that, 

\begin{equation}
\sum_{i \in C, j \in P} w_{ij}(t) = R_i, \forall i \in C
\label{operation 6}
\end{equation}
Constraint in equation \ref{operation 6} guarantees that demands generated by the consumer until now are satisfied.

\begin{equation}
\sum_{i \in C, j \in P} w_{ij}(t) \le M_{j}, \forall j \in P
\label{operation 7}
\end{equation}
Constraint in equation \ref{operation 7} guarantees limited capacities for the producers.
\end{problem}

\begin{problem}
Consider a version of the problem \ref{sec:problem-definition} where edge distances $d_{ij}$ can change as specified by the service requests.
\end{problem}

\begin{problem}
Consider a version of the problem \ref{sec:problem-definition} where capacities associated with the producers $M_j$ can change as specified by the service request.
\end{problem}

\begin{problem}
Consider a version of the problem \ref{sec:problem-definition} where new consumers / producers can be added to the graph and some of the existing consumers / producers can go down.
\end{problem}

\section{Motivation}
The VMs running in a distributed system can be considered as the consumer and the data-centers as the producers of storage. The capacity of the producers can be considered as an attributes of the producer. The average time per I/O operation or latency between a VM and a data-center can be considered as the distance of the edge. And the sequence of I/O requests generated by the consumers can be considered at the service request in the problem \ref{sec:problem-definition}. And the edge failures are equivalent to the data-center or the VM being down.

The objective of this paper is to find a scheme for allocating these requests so that the overall cost of I/O operations at any instant is not more than $\alpha$ times worse than the optimal cost OPT that can be achieved when all the service requests are know at the beginning.

As more and more data moves to the cloud every day, it becomes important to analyze distributed resource scheduling schemes for better performance of VMs with respect to I/O read and write operations. To make the storage management transparent to the users of the Cloud platform it is important to have automated storage management schemes running on the cloud platform that make the best use of the available storage while guaranteeing good performance to the users. Aggregating the available storage across the distributed system into resource pools and distributing them prevents the storage from being wasted.

\section{Introduction}
One of the well known Distributed resource scheduling schemes is used in VMware's virtualization framework \cite{vmware-scale-storage} - Virtual Infrastructure using VirtualCenter - a centralized distributed system and, recently in VSphere - a cloud OS. Both these systems work by pooling the available storage into resource pools in a tree-like data structure. This paper \cite{survey-distributed-computing} analyzes the various distributed resource allocation techniques used in distributed systems.

We consider the theoretical aspects of this problem of allocating storage optimally to the VMs. Some of the points to be noted are as follows-
\begin{itemize}
\item[\ ]{{$\bullet$} Storage associated with a VM is a non decreasing function of time. }
\item[\ ]{{$\bullet$} Previously allocated storage cannot be moved to another data-center as this can be very time-consuming for large storage sizes.}
\item[\ ]{{$\bullet$} Some VMs may consume storage at a faster rate and can starve other VMs.}
\item[\ ]{{$\bullet$} To reduce the complexity we only consider a graph with fixed number of nodes; new VMs or data-centers are not added dynamically.}
\end{itemize}

Similar problems involving min-flow \cite{min-flow-related}, online matching \cite{online-matching,online-weighted-bip-match}, dynamic assignment \cite{dynamic-assignment}, LP techniques \cite{interior-point-methods,lp-onlline}, bipartite network flow \cite{bip-network-flow} and combinatorial optimization \cite{distributed-comb-opt}, distributed resource allocation \cite{constraint-driven-scheduling,chromatic-sums,job-assign-scalable} have been studied in earlier. Hungarian algorithm\cite{hungarian-algo} is one of the earliest known method for solving the assignment problem that uses the primal-dual method. It is to be noted that the problem studied in this paper does not require a matching.

Fairness of resource allocation \cite{fairness-alloc,balanced-alloc} and dynamic load balancing \cite{online-load-balancing,related-load-balancing,hierearchical-load-balancing} issues related to these problems have also been studied before. \cite{power-of-randomization-online} studies how randomization can be used to improve the competitiveness of online algorithms. \cite{cloud-computing} throws light on how such schemes could be adapted to large scale cloud-computing platforms.

\section{Offline Algorithms}
Optimal offline algorithm for this problem is a Linear Program. It seems trivial at first to iterate through the demands and allocate weights corresponding to demands, on the edges with least distance $d_{ij}$ that are connected to producers with available capacity. However, the order in which demands are considered will affect the cost of output. The optimal algorithm has to look at all the demands simultaneously and then look at all the available edges where the demands can be allocated.

Due to this, one has to consider all possible assignments of demands to edges in order to find the least cost assignment. This paper uses LP for solving the offline version due to ready availability of LP code that is used for simulations in section \ref{sec:simulations}.
\subsection{LP}
\label{sec:offline-LP}
The LP formulation for this problem is as follows -

Objective function:
\begin{equation}
Minimize: \sum_{i\in C, j \in P} d_{ij}*w_{ij}, \; w_{ij} >= 0, \: d_{ij} > 0
\label{operation 8}
\end{equation}

As the demands are non-negative the weights are also non-negative. Edges that fail \ref{operation 4} have their corresponding $d_{ij}$'s set to infinity so that, they are not selected.

Constraints:
\begin{equation}
\sum_{i\in C, j \in P} w_{ij} \ge R_i, \forall i \in C
\label{operation 9}
\end{equation}

Equation in \ref{operation 9} suggests the total demand of a consumer $i \in C$ should be met.
\begin{equation}
\sum_{i\in C, j \in P} w_{ij} \le M_{j} \implies -\sum_{i\in C, j \in P} w_{ij} \ge -M_{j}, \; \forall j \in P
\label{operation 10}
\end{equation}

Equation in \ref{operation 10} suggests, the capacities of the producers cannot not be exceeded.
\begin{lemma}
LP formulation in \ref{sec:offline-LP} produces a valid assignment of weights $w_{ij}$ on edges $e_{ij}$ corresponding to the demands R.
\end{lemma}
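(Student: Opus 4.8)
The statement asks that the LP of Section~\ref{sec:offline-LP} actually returns a weight assignment meeting the original requirements, so the plan is to check three things in order: (i) the LP~\ref{operation 8}--\ref{operation 10} is feasible and has an optimal solution; (ii) that optimal solution satisfies the demand constraint~\ref{operation 6} and the capacity constraint~\ref{operation 7}; and (iii) it places no weight on a failed edge~\ref{operation 4}. First I would dispose of feasibility: working on the complete bipartite graph, and assuming the instance is consistent in the sense that $\sum_{i\in C} R_i \le \sum_{j\in P} M_j$ (without which no assignment, offline or online, can exist), a feasible point is constructed by a water-filling argument — process the consumers one at a time and pour each demand $R_i$ into producers with spare capacity; this never gets stuck because the total spare capacity never drops below the total remaining demand. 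Since the objective $\sum_{i,j} d_{ij} w_{ij}$ is bounded below by $0$ (all $d_{ij}>0$, all $w_{ij}\ge 0$) over this nonempty polyhedron, an optimal solution $w^{*}$ exists.

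Next, because $w^{*}$ lies in the feasible region it already satisfies $w^{*}_{ij}\ge 0$ and the producer bounds $\sum_{j} w^{*}_{ij}\le M_j$, which is exactly~\ref{operation 7}, as well as $\sum_{j} w^{*}_{ij}\ge R_i$. To obtain the equality required by~\ref{operation 6} I would argue that this last inequality is tight at optimum: if $\sum_{j} w^{*}_{ij} > R_i$ for some consumer $i$, then some $w^{*}_{ij}>0$, and lowering it by a small $\varepsilon>0$ keeps every demand and capacity constraint satisfied while strictly decreasing the cost (since $d_{ij}>0$), contradicting optimality; hence $\sum_{j} w^{*}_{ij}=R_i$ for all $i$. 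For part (iii), a failed edge has $d_{ij}=\infty$ (equivalently a prohibitively large constant in the simulation), so $w^{*}_{ij}>0$ on such an edge would force an infinite objective value, contradicting the finite-cost feasible point produced in step~(i); therefore $w^{*}_{ij}=0$ whenever $e_{ij}=0$. Combining the three parts, $w^{*}$ is a valid assignment.

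I expect the only genuine subtlety to be the tightness-at-optimum argument for the demand constraints (together with the need to state the consistency assumption $\sum_i R_i \le \sum_j M_j$ explicitly, which the problem statement leaves implicit); the rest follows directly from unwinding the definition of the LP's feasible region. It is also worth one sentence to note that, since the LP is posed over a \emph{complete} bipartite graph, connectivity or degree restrictions never obstruct feasibility — the sole possible obstruction is aggregate producer capacity, which the consistency assumption rules out.
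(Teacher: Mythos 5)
Your proposal is correct, but it does substantially more work than the paper, which disposes of this lemma in two sentences: it simply reads off that constraint~\ref{operation 9} matches the demand requirement and constraint~\ref{operation 10} matches the capacity requirement, and declares validity by definition. You go further in three ways that the paper silently skips. First, you establish that an optimal solution exists at all, which requires the consistency assumption $\sum_{i\in C} R_i \le \sum_{j\in P} M_j$ and a feasibility construction (your water-filling argument); the paper never addresses feasibility. Second, you notice the genuine mismatch between the LP's demand constraint $\sum_j w_{ij} \ge R_i$ in~\ref{operation 9} and the problem's equality requirement $\sum_j w_{ij} = R_i$ in~\ref{operation 6}, and you close it with a tightness-at-optimum argument (lower a positive weight by $\varepsilon$, stay feasible, strictly decrease cost since $d_{ij}>0$). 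The paper's one-line claim that equation~\ref{operation 9} ``guarantees the demand is satisfied'' glosses over exactly this point. Third, you handle failed edges explicitly, which the paper relegates to a remark outside the proof. The trade-off is that your argument proves validity only of the \emph{optimal} solution $w^{*}$ (both the tightness and the failed-edge steps invoke optimality), whereas the paper's reading is that any feasible point is ``valid''; under the problem's equality constraint~\ref{operation 6} that weaker reading is actually false for non-optimal feasible points, so your restriction to the optimum is the right call, not a limitation. In short: same underlying observation (LP constraints encode the validity conditions), but your version is a complete proof where the paper's is an assertion.
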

\begin{proof}
Equation \ref{operation 9} gurantees that the total demand generated by consumer $i \in C$ is satisfied. Equation \ref{operation 10} ensures that the capacities of producers $j \in P$ are not exceeded. By definition \ref{sec:problem-definition} this is a valid assignment of weight on edges.\qed
\end{proof}

\begin{lemma}
LP formulation in \ref{sec:offline-LP} produces the optimal assignment of weights $w_{ij}$ on edges $e_{ij}$ corresponding to the demands in R.
\end{lemma}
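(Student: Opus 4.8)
The plan is to show that the feasible region of the linear program in Section~\ref{sec:offline-LP} coincides exactly with the set of all valid weight assignments for the offline problem, and that the LP objective equals the true cost being minimized; optimality of the LP solution then follows immediately from optimality of the LP over its own feasible region.

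First I would establish the set equality between LP-feasible points and valid assignments. One direction is already the content of the preceding lemma: any $(w_{ij})$ satisfying \ref{operation 9}, \ref{operation 10}, and $w_{ij} \ge 0$ is a valid assignment of weights on edges. For the converse I would argue that any assignment that is valid in the sense of the Problem in Section~\ref{sec:problem-definition} must satisfy the demand constraint \ref{operation 6} and the capacity constraint \ref{operation 7}, which are precisely constraints \ref{operation 9} and \ref{operation 10} (the inequality direction in \ref{operation 9} is harmless, since at an optimum of a nonnegative-cost objective no consumer is over-served), together with non-negativity, which holds because demands are non-negative. Hence the two sets are equal. Edges that have failed are handled by the convention $d_{ij} = \infty$, equivalently by removing the corresponding variable $w_{ij}$ from the program, so that the LP remains a genuine finite linear program and never places weight on a dead edge.

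Second, I would observe that the LP objective \ref{operation 8}, namely $\sum_{i\in C,\,j\in P} d_{ij}\,w_{ij}$, is literally the cost expression on the left-hand side of \ref{operation 5} (with $e_{ij}=1$ on live edges). Therefore minimizing the LP objective over the LP polytope is the same optimization problem, term by term, as minimizing the offline cost over the set of valid assignments. Combined with the set equality above, the minimum value attained by the LP is exactly $OPT(t)$, and any minimizer returned by the LP is an optimal offline assignment.

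Third, I would dispatch the routine existence question: the LP is feasible because the graph is complete bipartite and (by the standing assumption of the problem) the total producer capacity $\sum_{j} M_j$ is at least the total demand $\sum_i R_i$, so at least one valid assignment exists; and the objective is bounded below by $0$ since all $d_{ij} > 0$ and $w_{ij} \ge 0$. Hence the minimum is attained. The main obstacle I anticipate is not any deep argument but precisely this bookkeeping: making the $d_{ij} = \infty$ convention rigorous so the program is a bona fide LP, and making explicit the feasibility hypothesis $\sum_j M_j \ge \sum_i R_i$ without which the claim is vacuous.
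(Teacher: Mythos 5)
Your proposal is correct and takes essentially the same route as the paper: identify the LP's feasible region with the set of valid assignments and its objective with the offline cost, so that the LP minimum is exactly $OPT(t)$. It is in fact more careful than the paper's own two-line argument, which only invokes the preceding lemma for the forward containment (the LP solution is a valid assignment) and then asserts optimality, omitting the converse containment, the handling of failed edges, and the feasibility/boundedness bookkeeping that you make explicit.
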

\begin{proof}
Lemma 1 ensures that this LP produces a valid solution. Since the objective \ref{operation 8} is a minimization function and fractional weights are allowed, it follows that the solution produced by LP is the optimal solution.\qed
\end{proof}
For edge failures, the LP formulation \ref{sec:offline-LP} has to be modified by removing the failed edges, adding constraints for the current weight assignments and adding the demands. 
\section{Online Algorithms}
\label{sec:online-algorithms}
Online algorithms are used for solving problems where the entire input is not known and partial decisions have to be made at each step. The input is received incrementally. Competitive ratio is used to measure the performance of online algorithm as compared to the optimal offline algorithm that knows the entire input.

A $\alpha$-competitive online algorithm ALG is defined as follows with respect to an optimum offline algorithm OPT -
\begin{equation}
cost(ALG(I)) \le\alpha*cost(OPT(I)) + \beta
\label{operation 14}
\end{equation}

In the definition of online algorithm in equation \ref{operation 14}, $\alpha$ is called the \emph{competitive ratio} and $\beta$ can be considered as the \emph{startup cost} of the algorithm.

One of the widely used online algorithm is for the k-server problem \cite{k-server-offline,k-server-lower-bounds,k-server-competetive,k-server-decomposition,k-server-randomized,k-server-survey} where k-servers have to service n clients and the order of the service requests from the clients is not known at the beginning. The objective of this problem is to find the shortest path to serve the clients.
\subsection{Greedy Algorithm}
\label{sec:online-greedy}
This algorithm looks for the best available edge - $min_{d_{ij}}$. This leads to a myopic behavior as the algorithm always looks for local minima. For example, consider a graph consisting of two consumers $c_1$ and $c_2$ with demands $R_1$ and $R_2$ such that, $R_1 < R_2$ and two producer $p_1$ and $p_2$ with capacities $M_1$ and $M_2$ respectively such that $M_1 > M_2$ and $R_1 = M_1$. Let the edges be $e_{11}$, $e_{12}$, $e_{21}$ and $e_{22}$ such that $d_{11} = d{21} = x $ and $d_{12} = d{22} = x+5$.

Demand $R_1$ from consumer $c_1$ arrives first and is allocated on the cheapest edge $e_{11}$. Since, $R_1 = M_1$ the demand $ d_2$ from consumer $c_2$ that arrives next is allocated on the edge $e_{22}$. The total cost is $x*R_1 + (x+5)*R_2$ = $x*(R_1 + R_2) + 5*R2$. The cost would have been $(x+5)*R_1+x*R_2$ = $x*(R_1 + R_2) + 5*R_1$ which is lesser than the cost of output produced by greedy since $R2 < R1$.

It is observed that,
\begin{equation}
\begin{split}
w_{ij} \alpha (1/d_{ij}), \forall i \in C, j \in P \\
\implies w_{ij} = k*(1/d_{ij}), \forall i \in C, j \in P
\end{split}
\label{operation 11}
\end{equation}

Substituting equation \ref{operation 11} in \ref{operation 10} we get -

\begin{equation}
\begin{split}
\sum_{i \in C, j \in P} k*(1/d_{ij}) \le M_{j}, \forall j \in P \\
\implies k \le [1/\sum_{i \in C, j \in P} (d_{ij})]*M_j, \forall j \in P
\end {split}
\label{operation 13}
\end{equation}

This paper aims to analyze an online algorithm that is more far-sighted. The Randomized-Greedy algorithm in the next section also considers the non-optimal edges to prevent getting stuck in local minima. 
\subsection{Randomized Greedy Algorithm}
\label{sec:online-randomized-greedy}
\begin{algorithm}
\caption{Randomized-Greedy}
\label{alg1}
\begin{algorithmic}
\STATE $S \gets FindTopAvail(k)$
\STATE $e_G \gets Greedy(S)$
\STATE $numIterations \gets 0$
\STATE $maxIterations \gets n$
\STATE $e_G \gets Greedy(S)$
\WHILE{$numIterations \leq maxIterations$}
\STATE $numIeratations \gets numIterations + 1$
\STATE $e_R \gets Random(S)$
\IF{$ d(e_R) / d(e_G) \le \beta $}
\RETURN $e_R$
\ENDIF
\ENDWHILE
\RETURN $e_G$
\end{algorithmic}
\end{algorithm}

This algorithm intends to be more far-sighted than the greedy algorithm. For this, it maintains a set of top k cheapest available edges, sorted by $d_{ij}$. At each step we consider and edge at random from this set and compare its cost of the edge selected by Greedy algorithm. If an edge meeting the condition $d(e_R) / d(e_G) \le \beta$ is found then we select this edge. This procedure is repeated {\emph n} number of times. If no such edge is found within n iterations then we simply return the edge selected by Greedy.

In this algorithm $\beta$ is called the sub-optimal penalty. When it is set to 1 then it becomes a greedy algorithm. As $\beta$ increases the algorithm is allowed to select higher cost edges. The aim is to prevent the algorithm from getting stuck in a local optima. As it is seen in the simulation results \ref{cost-rand}, this algorithm performs better when the distribution of service requests is not uniform.
\subsubsection{Derandomization of Randomized-Greedy algorithm}
\label{sec:derandomization}
This papers aims to measure the performance of online algorithms using pairwise-independent random numbers as input in place of the real-world data. Recursive n-gram hashing is used to generate consumer demands and producer capacities.
The randomized greedy algorithm is derandomized by running it multiple times and the output with the least cost is selected among the available outputs. Using this method requires higher processing power although, it improves the performance of Randomized algorithm.
\section{Simulations}
\label{sec:simulations}
This paper simulates the online assignment problem using C++ - \(GCC - 4.3.0 20080428\), LP solver \(lp\_solve v.5.1.1.3\) and shell scripts on Linux (Red Hat 4.3.0-8) platform. Shell script is used to generate input using built-in random number generator for edge distances, producer capacities, consumer demands, edge failures. The online greedy and randomized algorithms \ref{alg1} are implemented in C++. The binary file containing these algorithms takes three parameters: type of algorithm, input file and output file.

The outermost shell script is invoked as -
\begin{verbatim}
evaluateDRSUsers1.sh $users $demands $resources $capacities $failures
\end{verbatim}

This script generates an input file for the online algorithms -
\begin{verbatim}
no of consumers: 2
no of producers: 2
edge distances
47
17
11
2
producer capacities
26
839
consumer demands
97
78
Number of edge failures: 1
1 <- demand #
3 <- edge #
\end{verbatim}
and for the offline LP solver -
\begin{verbatim}
min: 47x1+17x2+11x3+2x4; <- objective function
x1+x3<=26; <- producer constraint 
x2+x4<=839;
x1+x2=97;  <- consumer constraint
x3+x4=78;
\end{verbatim}

This script then invokes the LP solver -
\begin{verbatim}
./lp_solve model.lp >> drsUsersOut.txt
\end{verbatim}

and the binary file containing the greedy and randomized algorithms -
\begin{verbatim}
./onlineDRSAlgo1.o greedy onlineIn.txt drsUsersOut.txt
./onlineDRSAlgo1.o randomized onlineIn.txt drsUsersOut.txt
\end{verbatim}

Main function in C++ for solving the online version of the problem -
\begin{verbatim}
int main(int argc, char* argv[]) {
   GetAlgo(argv[1]);
   ReadInputFile(argv[2]);
   GetNumProducers();
   GetNumConsumers();
   GetEdgeDist();
   InitCEWeights();
   InitEWeights(); 
   GetConsumerDemands();
   GetProducerCapacities();
   InitEdgeAvail();
   GetEdgeFailures();
   AllocateDemands();
   WriteOutputFile(argv[3]);
   return 0;
}
\end{verbatim}

\begin{figure}
\begin{center}
\includegraphics[width=100 mm]{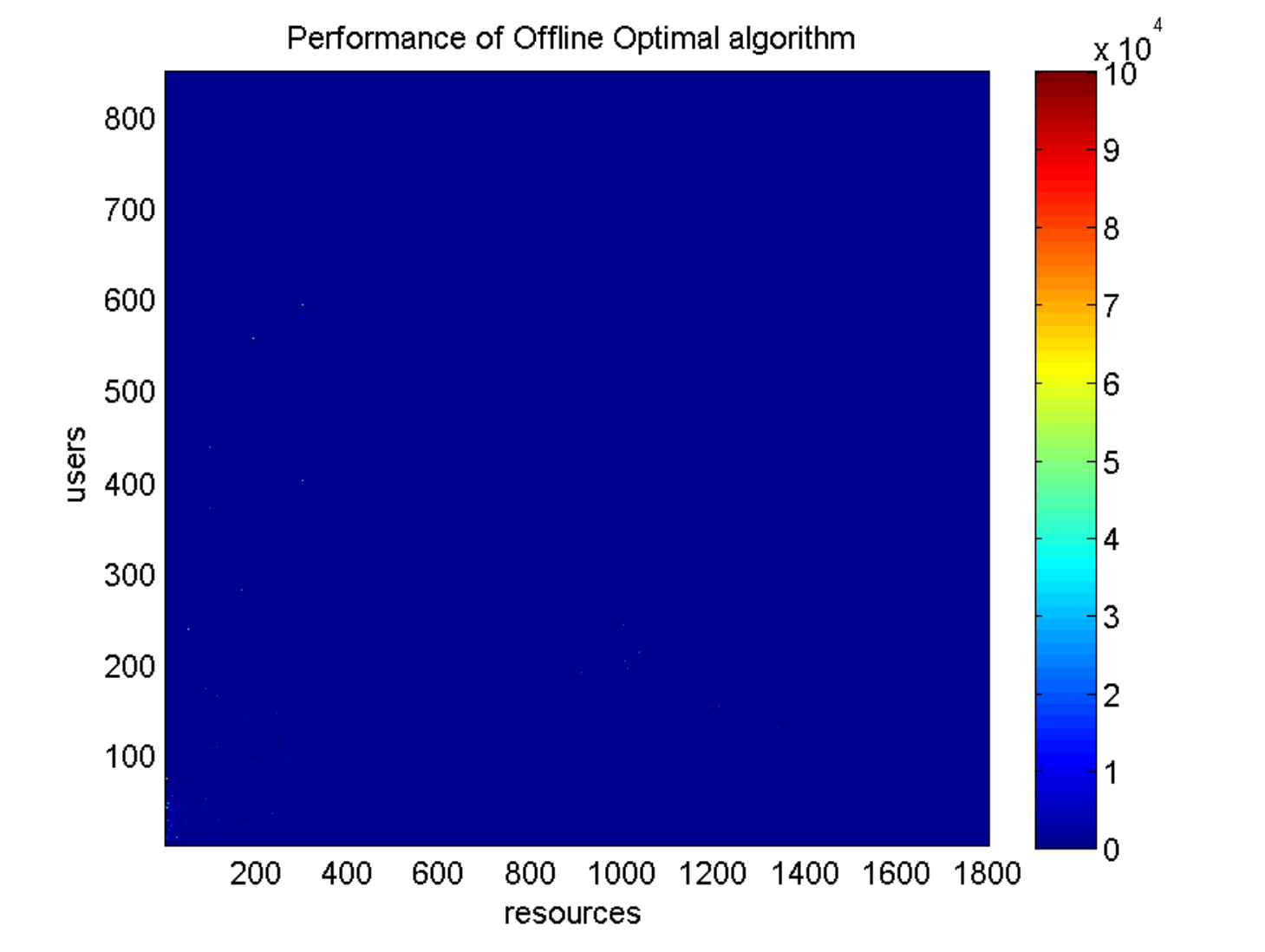}
\end{center}
\caption{Cost vs Offline Optimal}
\label{cost-opt}
\end{figure}
\begin{figure}
\begin{center}
\includegraphics[width=100 mm]{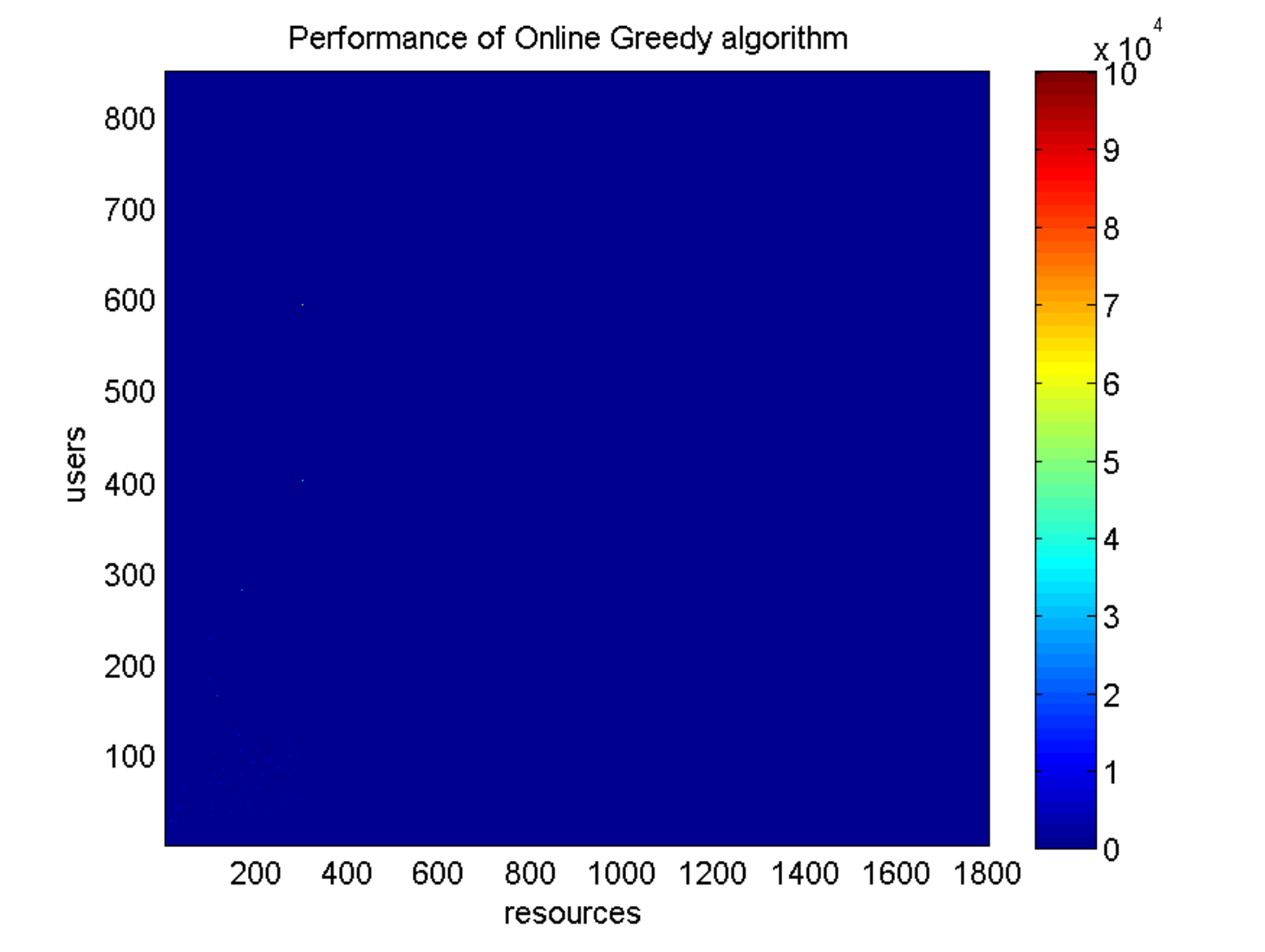}
\end{center}
\caption{Cost vs Online Greedy}
\label{cost-greedy}
\end{figure}
\begin{figure}[H]
\begin{center}
\includegraphics[width=100 mm]{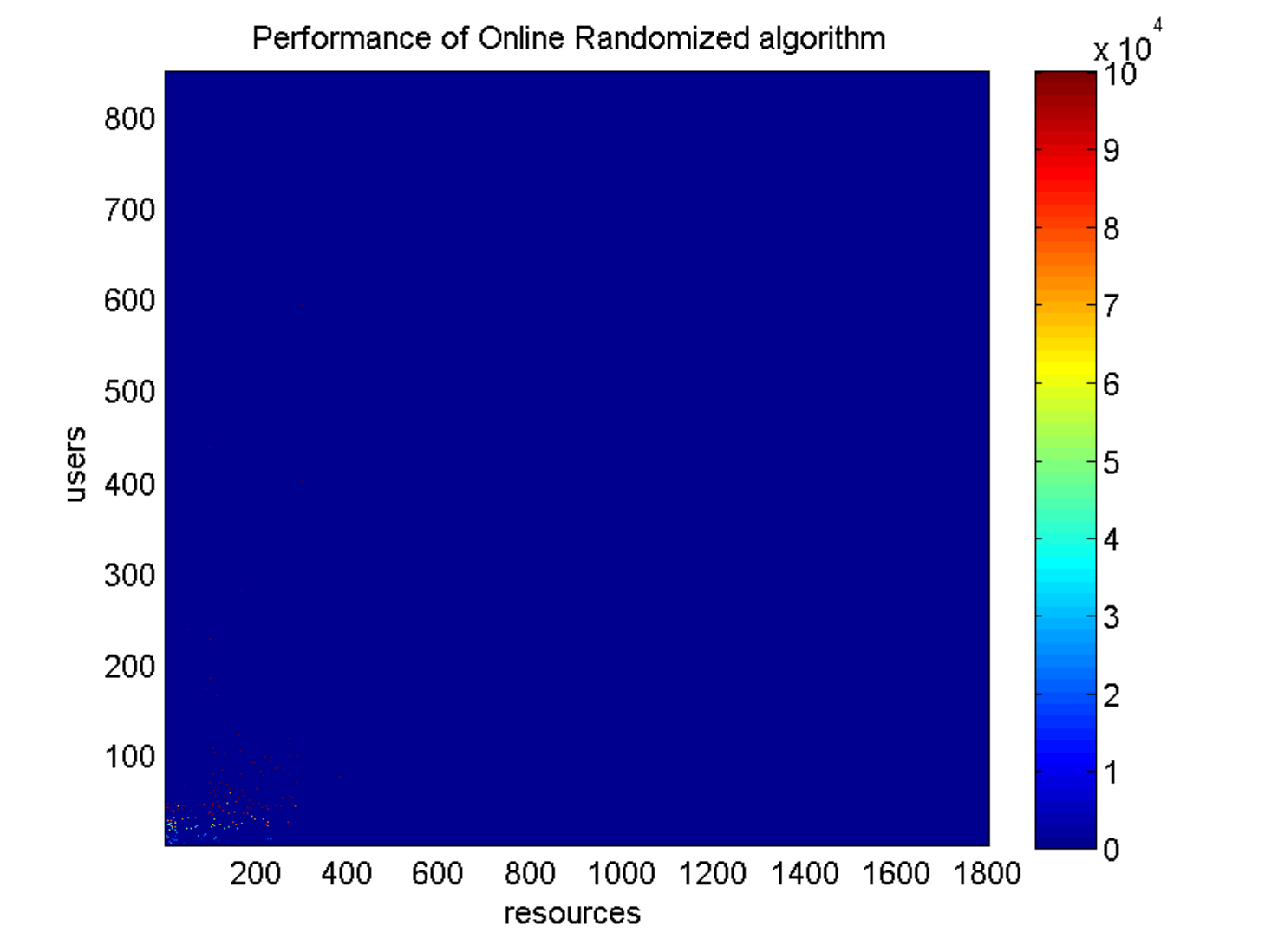}
\end{center}
\caption{Cost vs Online Randomized}
\label{cost-rand}
\end{figure}

The weights of edges are initialized to zero. The availabilities of edges are initialized to the capacity of the producers connected. For greedy algorithm, the program chooses the available edge with the least cost. For the randomized algorithm, the program first sorts the available edges in increasing order. Then picks an edge suggested by the srand() and rand() function such that the cost of this edge is less than $\beta$ times the best available edge.

The weight of the edge returned - $weight(e)$, is increased by the value of demand and the availability $avail(e)$ is decreased by the same value. If there are any edge failures at the current demand the program generates a new internal demand and sets $avail(e) = weight(e) = 0$ for the edge that failed. The program writes the total cost along with the $weight(e), \forall e \in E$ to the output file. STL map data-structures with double data-types are used for edge and node properties. STL 2-D vectors with double data-types are used for tracking the weights allocated on the edges for each consumer. 

For finding the optimal offline solution in the case of edge failures additional constraints are added to LP at each instance of edge failure, such that the currently allocated weights are not disturbed. This is done for each of the edge failures with the demand of the consumers at each stage being equal to their sum of demands until the edge failed.


\subsection{Analysis of Simulation Results}
\label{sec:simulation-results-analysis}
In \emph{Fig. \ref{cost-opt}} as the number of users increases with the same number of resources, the cost of the optimal offline also increases. The number of incoming edges to any producer increases as the consumers increase. This increases the competition for the resources available at the producers. The optimal offline algorithm has to select costlier edges due to limited availability of edges, as the number of demands increase and this increases the overall cost. This is inline with Fig 4 in \cite{vmware-scale-storage} where it is referred to the latency (which is the overall time required to complete I/O requests). 

Greedy algorithm \emph{Fig. \ref{cost-greedy}} closely follows the optimal offline algorithm at each step of the input. greedy which produces output that has a cost comparable to the optimal solution and does not deviate much apart from the cases where higher demands arrive later as described in paragraph 2 in section \ref{sec:online-greedy}.

Randomized algorithm does better than Greedy algorithm in cases where higher demands arrive later in the sequence. In this case, greedy algorithm fails badly as it does not have any cheaper edges left towards the end and has to select costlier edges for higher demands. However, Randomized algorithm\emph{Fig. \ref{cost-rand}} does not cope well for graphs with higher number of nodes. Randomized algorithm pays the penalty of selecting a suboptimal edge to get an overall cost improvement; however due to large size of the demands it is not able to recover from this penalty and this causes a cascading effect. In \emph{Fig.3} the randomized online algorithm shows the highest standard deviation because of the random nature of the selections.

\section{Conclusion}
\label{sec:conclusion}
This paper concludes that it is possible to produce a competitive online algorithm for this problem using randomization. The competitiveness of this algorithm suffers for higher number of nodes and edges in the graph. This paper believes that it is possible to produce a competitive version of this algorithm by setting the suboptimal penalty to zero at the beginning of the algorithm and iteratively increasing it depending on the effect on the cost of the output.

\section{Appendix}
\label{sec:appendix}
The results found this paper are based on the simulation experiments. It will be interesting to measure the performance of these online algorithms for real-world I/O demands. And, how it scales with the size of input.

Work on solving other versions of this problem where the distances $d_{ij}$s of edges may change over time is currently in progress.

\section{Glossary}
\label{sec:glossary}
\begin{definition} \emph {Dynamic graphs:} Graphs that have a finite set of nodes and edges but where, the edges or nodes may become unavailable and available over time. \end{definition}
\begin{definition} \emph {Derandomization:} Process of removing randomness from an algorithm to make it more deterministic. \end{definition}
\begin{definition} \emph {Assignment problem:} Given a bipartite graph with tasks and resource on either side and a cost associated with allocating a task to a resource, the assignment problem is used to find the optimal allocation of the resources to the tasks (to minimize a Objective function) \end{definition}
\begin{definition} \emph{Linear programming:} A method used to solve large scale optimization problems with set constraints and objective function (minimize or maximize a quantity) \end{definition}
\begin{definition} \emph{Combinatorial optimization:} A certain class of optimization problems that involves minimization of a Objective function where there are multiple choices available at each step. \end{definition}

\end{document}